\newtheorem{thm}{Theorem}[section]
\newtheorem{lem}[thm]{Lemma}
\theoremstyle{definition}
\newtheorem{rem}[thm]{Remark}
\def\R{\mathbb{R}}
\def\N{\mathbb{N}}
\def\e{\epsilon}
\def\g{\gamma}
\def\a{\alpha}
\DeclareMathOperator{\card}{card}
\DeclareMathOperator{\dist}{dist}
\newcommand{\Flat}{\mathsf{Flat}}
\numberwithin{equation}{section}
\begin{document}

\title[Time complexity of the ATSP algorithm]{Time complexity of the Analyst's Traveling Salesman algorithm}

\author{Anthony Ramirez}
\email{aramir13@utk.edu}
\address{Department of Mathematics, The University of Tennessee, Knoxville, TN  37966}

\author{Vyron Vellis}
\email{vvellis@utk.edu}
\address{Department of Mathematics, The University of Tennessee, Knoxville, TN  37966}

\thanks{V. V. was partially supported by the NSF DMS grant 1952510.}

\subjclass[2010]{Primary 68Q25; Secondary 28A75, 68R10}

\date{\today}

\keywords{approximation algorithms, polynomial-time approximation scheme, traveling salesperson problem, analyst traveling salesman problem}

\begin{abstract}
The Analyst's Traveling Salesman Problem asks for conditions under which a (finite or infinite) subset of $\R^N$ is contained on a curve of finite length. We show that for finite sets, the algorithm constructed in \cite{Schul-Hilbert,BNV} that solves the Analyst's Traveling Salesman Problem has polynomial time complexity and we determine the sharp exponent.
\end{abstract}

\maketitle

\section{Introduction}

The \emph{(Euclidean) Traveling Salesman Problem (TSP)} \cite{lawler,gutin,applegate} asks to find the shortest path through a set $V$ of $n$ points in $\R^N$ that starts and ends at a given vertex $v_0$ of $V$.  Apart from its natural applications in itinerary design, and its influence on operations research, and polyhedral theory, in the last 50 years the TSP has gained great prominence in computer science due to its immense computational complexity. For example, it is well known that the TSP is NP-hard \cite{Gar76, Pap77}; that is, it is at least as difficult as the hardest problems in NP (the class of all problems that can be solved by a non-deterministic Turing machine in polynomial time $O(n^k)$ for some $k\in\N$).

What is the minimum amount of time (depending on $n$) which is required to obtain a solution of the TSP? Can it be polynomial (i.e. $O(n^k)$ for some $k\in\N$)? Obviously, one could simply try all possible paths but that would require time comparable to $(n-1)!$ which is far from being polynomial. The Bellman-Held-Karp algorithm \cite{Bellman,HK} improves the latter bound to $O(n^2 2^n)$. However, it is still unknown whether a polynomially complex algorithm exists. 

In lieu of the above discussion, ``nearly-optimal" algorithms have been explored. That is, algorithms that produce a path which may not be the optimal one but is comparable (or even arbitrarily close) in length to the optimal one. The \emph{nearest insertion algorithm} \cite{RSL2} computes in $O(n^2)$ a path which is at most twice in length of the optimal one; see also \cite{GBDS} and \cite{LK} for heuristics of similar time complexity. Christofides heuristic \cite{Christ} computes in $O(n^3)$ time a path that is at most $3/2$ times the length of the optimal one. Grigni, Koutsoupias, and Papadimitriou \cite{GKP} designed an algorithm for planar graphs that for all $\e>0$ provides in $O(n^{O(1/\e)})$ time a tour of length at most $(1+\e)$ times the optimal one. A similar result of $O(n^{O(1/\e)})$ time was also obtained later by Mitchel \cite{M}. In his celebrated work, Arora \cite{Arora} constructed an algorithm which, for each $\e>0$, gives with probability more than $1/2$ a path which is at most $(1+\e)$ times the length of the optimal one in $O(n(\log{n})^{(O(\sqrt{N}/\e))^{N-1}})$ time. Arora's algorithm can be derandomized by increasing the running time by O$(n^N)$.  See also \cite{BGK} for relevant results.

The \emph{Analyst's Traveling Salesman Problem (ATSP)} \cite{Jones-TST} is a generalization of the TSP where it is asked to find a curve of finite length that contains a given (finite or infinite) set $V \subset \R^N$. While the TSP (which is the finite case of ATSP) always admits a solution, the same is not true in general for the ATSP. For instance, if $V$ is an unbounded set, then clearly every curve that contains $V$ must be unbounded itself, hence with infinite length. Perhaps less intuitively, smaller, but still infinite, sets may not admit a solution. For example, if $V$ is the set of all points in the unit square $[0,1]^2$ with rational coordinates, it is not hard to see that $V$ is bounded and countable but there exists no rectifiable curve that contains $V$. 

The classification of sets in $\R^N$ for which the ATSP can be solved, was done by Jones \cite{Jones-TST} (for $\mathbb{R}^2$) and by Okikiolu \cite{Ok-TST} (for higher dimensional spaces). Ever since, this classification has played a crucial role in the development of geometric measure theory, and has found applications in many facets of analysis including complex analysis, dynamics, harmonic analysis and metric geometry. The work of Jones and Okikiolu provides an algorithm that, for any finite set $V \subset \R^N$ yields a tour that is at most $C(N)$ times longer than the optimal path. Here $C(N)$ is a constant that depends only on the dimension $N$. The Jones-Okikiolu algorithm is based on a local version of the Farthest Insertion algorithm \cite{JM}.

Later, Schul \cite{Schul-Hilbert} provided a modification of the algorithm so that the ratio of the length of the yielded path over the length of the optimal path is bounded by a constant $C$ independent of the dimension $N$. Variation of this algorithm also appears in \cite{BNV}. Here and for the rest of this paper, we refer to any of these two variations as the ATSP algorithm. The purpose of this note is to show that the ATSP algorithm, in the case that $V$ is finite, has polynomial time complexity.

\begin{thm}\label{thm:main}
The time complexity of the ATSP algorithm is $O(n^3)$.
\end{thm}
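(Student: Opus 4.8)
Throughout I would treat the ambient dimension $N$ as a fixed constant (so that all implied constants, including the approximation constant of \cite{Schul-Hilbert,BNV}, may depend on $N$) and count elementary arithmetic and comparison operations as a function of $n=\card V$. First I would isolate, from the descriptions in \cite{Schul-Hilbert,BNV}, the three cost-carrying phases of the algorithm: (i) building the nested sequence of nets $X_{k}$, where $X_k$ is a maximal $2^{-k}$-separated subset of $V$ (chosen, say, by the greedy farthest-point rule so that $X_{k-1}\subseteq X_k$), together with the Jones-type flatness numbers $\beta(x,k)$ attached to the balls $B(x,2^{-k})$, $x\in X_k$; (ii) forming, for each net point, the local connections to its $O(1)$ same-generation neighbours (segments or short detours, according to the flatness data); (iii) assembling the resulting edge set into the output curve $\Gamma$ and traversing it. The running time is the sum of the costs of (i)--(iii) over all relevant scales.

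The plan then rests on one combinatorial reduction: \emph{only $O(n)$ scales are relevant}. A priori the number of dyadic scales between $\diam V$ and $\min_{u\neq v}|u-v|$ is of order $\log_2\bigl(\diam V/\min_{u\neq v}|u-v|\bigr)$, which is not controlled by $n$; this is exactly why polynomiality is not obvious. However, the nets are nested, $|X_{k}|=1$ for $2^{-k}\geq \diam V$, and $X_k=V$ once $2^{-k}<\tfrac12\min_{u\neq v}|u-v|$, so $|X_k|$ increases monotonically from $1$ to $n$; hence there are at most $n-1$ \emph{active} generations $k$ with $X_k\neq X_{k-1}$, and at every other scale the nets, the flatness data, and $\Gamma$ are unchanged (after a harmless relabelling). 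I would first compute all $\binom{n}{2}$ interpoint distances and the farthest-point ordering of $V$ in $O(n^2)$ time --- this determines the entry generation of each point and hence the list of active generations --- and then argue that the algorithm may be run over these $O(n)$ generations only, without changing its output.

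Next I would bound the work at a single active generation $k$ (where $|X_k|\leq n$) by $O(n^2)$. The dominant contribution is the maintenance of the net and of the nearest-net-point assignment --- and, depending on the precise variant, the evaluation of $\beta(x,k)$: in the worst case one ball $B(x,C2^{-k})$ contains a constant fraction of $V$, and the algorithm as written rescans the points of $V$ against the current net, costing $O(n)$ per net point and $O(n^2)$ for the generation. The remaining per-generation work --- selecting for each $x\in X_k$ its $O(1)$ neighbouring net points, solving the constant-size local connection problem, and updating the edge list --- is $O(1)$ per net point by the separation and doubling properties, hence $O(n)$ in total. Summing $O(n^2)$ over the $O(n)$ active generations, and adding the $O(n)$ cost of the final Eulerian traversal of the $O(n)$-edge curve $\Gamma$, gives the upper bound $O(n^3)$.

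Finally, for sharpness of the exponent I would exhibit an explicit bad input. Take $V_n=\{0,1,\tfrac12,\tfrac14,\dots,2^{-(n-2)}\}$, placed on a line in $\R^N$. The farthest-point rule gives $X_k=\{0,1,\tfrac12,\dots,2^{-(k-1)}\}$ at the $k$-th active generation, so there are $n-1$ active generations, and at the $k$-th one the algorithm rescans all $n$ points of $V_n$ against the current $\Theta(k)$-point net; since $\sum_k\Theta(nk)=\Theta(n^3)$, the algorithm performs $\Omega(n^3)$ operations on $V_n$, so its complexity is $\Theta(n^3)$. \textbf{The main obstacle} I anticipate is not the arithmetic but the two structural points just used: (a) verifying carefully that the algorithm of \cite{Schul-Hilbert,BNV} can be reorganised to visit only the $O(n)$ active generations while producing the same curve (without this the running time is not even finite in $n$); and (b) fixing the computational model precisely enough that the $\Omega(n^3)$ rescanning in the lower bound is genuinely forced by the algorithm as specified, rather than an artefact of a naive implementation. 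Once (a) and (b) are settled, the $O(n)\times O(n^2)$ accounting is routine.
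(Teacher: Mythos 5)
Your proposal follows essentially the same route as the paper: the decisive point in both is that only the $O(n)$ scales at which the nested nets strictly grow need to be visited (the paper builds this in by having Lemma \ref{lem:nets} jump adaptively to the next scale where the net changes), each such generation costs $O(n^2)$ for the net update, the flatness numbers, and the local edge construction, and the sharpness example $\{1,2^{-1},\dots,2^{2-n},0\}$ with the $\sum_k \Theta(k(n-k))=\Theta(n^3)$ rescanning count is the very one used in Section \ref{sec:sharp}. The only minor deviations are implementation-level (you precompute the farthest-point ordering rather than computing the next active scale on the fly, and you do the final traversal in $O(n)$ where the paper's tour lemma allows itself $O(n^3)$), neither of which affects the argument.
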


We remark that although time-wise this algorithm is fast, the ratio constant of the yielded path over the optimal path has not been computed and is much larger that Christofides' $3/2$ ratio. In fact, in our algorithm, the yielded path has length at 
most $(300)^{9/2}\log{300}$ times the length of the smallest spanning tree. Moreover, the exponent $3$ in our theorem is sharp and can not be lowered; see Section \ref{sec:sharp}.

An interesting connection to the Jones-Scul algorithm was given by Gu, Lutz, and Mayordomo \cite{glm06} who classified the sets $V$ that admit a solution to a computable extension of the ASTP. This variant of the problem characterizes the sets which are contained in a rectifiable computable curve. As we are concerned only with finite sets here, our algorithm already produces computable curves.

\subsection{Outline of the ATSP algorithm}\label{sec:outline}
Fix a set $V = \{v_1,\dots,v_n\}\subset \mathbb{R}^N$. The algorithm (described in detail in Section \ref{sec:graphs}) computes connected graphs $G_k=(V_k,E_k)$ with $k=1,\dots,m$ for some $m\leq n$, so that the last graph satisfies $V_m=V$. First, we compute $R_0 = 5\max\{|v| : v\in V\}$ which takes $O(n)$ time. It is clear that $V \subset [-R_0,R_0]^N$. The construction of graphs now is roughly as follows.

\subsubsection*{Step 1:} For the first graph we have $V_1=\{v_1\}$ and $E_1 =\emptyset$. We set $n_1=1$ and $U_1=V\setminus V_1$. If $U_1 = \emptyset$, then proceed to the \emph{Final Step}. Otherwise proceed to the next step.

\subsubsection*{Step $k+1$:} Inductively, we assume that for some $k\in\N$, we have defined an integer $n_k\in\N$, two sets $V_k, U_k \subset V$, and a graph $G_{k} = (V_k,E_k)$ such that $U_k = V \setminus V_k$, $U_k \neq \emptyset$, and $V_k$ is a maximal $(2^{-n_k}R_0)$-separated set. We use results in Section \ref{sec:nets} to define an integer $n_{k+1} > n_k$, a set $V_{k+1}$ that contains $V_k$ and is contained in $V$, and a set $U_{k+1} = V\setminus V_{k+1}$ such that $V_{k+1}$ is a maximal $(2^{-n_{k+1}}R_0)$-separated set and $\dist_H(V_k,V_{k+1}) <  2^{-n_{k+1}}R_0$. Here and for the rest of this paper, $\dist_H(A,B)$ denotes the Hausdorff distance between closed sets $A,B\subset \R^N$
\[ \dist_H(A,B) = \max\left\{ \sup_{x\in A}\inf_{y\in B}|x-y|, \sup_{y\in B}\inf_{x\in A}|x-y|\right\}.\]

Next, using results in Section \ref{sec:flat}, we define for each $v\in V_k$ a number $\a$ that measures how ``flat'' the set $V_{k+1}$ is around the point $v$. This notion of flatness is inspired by the \emph{Jones beta numbers} in the work of Jones \cite{Jones-TST} and Schul \cite{Schul-Hilbert}. 

Next, following the constructions in \cite{Schul-Hilbert} and \cite{BNV}, we create a connected graph $G_{k+1} = (V_{k+1},E_{k+1})$. Roughly speaking, if around some node $v$ of $G_{k}$, the set $V_{k+1}$ looks flat (i.e. the number $\a$ is small), then the graph $G_{k+1}$ around $v$ should also be flat; otherwise, we add new edges around $v$ so that the new graph is connected. We take care so that the total number of new edges is at most twice the total number of new points. 

\subsubsection*{Final Step:} Arriving at this step, we have created a graph $G_m = (V_m, E_m)$ with $V_m=V$. Since each $V_k$ is different than $V_{k-1}$, we arrive at the final step in at most $n$ steps. We show that each step above takes $O(n^2)$ time, and therefore, we obtain $G_m$ in $O(n^3)$ time. Now we use an algorithm in Section \ref{sec:tour} to parameterize $G_m$ in $O(n^3)$ time.

\subsection{H\"older curves} It should be noted that the work of Jones \cite{Jones-TST}, Okikiolu \cite{Ok-TST}, and Schul \cite{Schul-Hilbert} provides the sets for which a solution exists but not the solution itself when the set is infinite. That is, they classify the sets which are contained in curves of finite length, but do not provide the parametrization of the curves. Recently, the second named author with Badger and Naples \cite{BNV}, constructed an algorithm that, for those sets $V$ that have a solution in ATSP, produces a solution path $f:[0,1]\to \R^N$ which is at most $C$ times longer than the optimal path with $C$ independent of $N$. In this algorithm, one obtains parameterizations $(f_k)$ for \emph{all graphs} $G_k$ and not just one of them. Moreover, the parameterizations obtained satisfy the inequality $\|f_k-f_{k+1}\|_{\infty} \leq 2^{-n_{k+1}}$ for all $k$. Although this approach is more complicated than the one of Jones and Okikiolu, it comes very handy in designing H\"older parameterizations of sets such as the space-filling Peano curve. For finite sets, the algorithm of constructing $(f_k)$ has also polynomial complexity (and in fact with the same exponent) but we will not pursue this direction here.

\section{Preliminaries on graphs}\label{appendix2}

A \emph{(combinatorial) graph} is a pair $G=(V,E)$ of a finite vertex set $V$ and an edge set 
\[E \subset \{\{v,v'\} : v,v' \in V\text{ and }v\neq v'\}.\]
A graph $G' = (V',E')$ is a \emph{subgraph} of $G=(V,E)$ (and we write $G\subset G'$) if $V'\subset V$ and $E'\subset E$. A \emph{path} in $G$ is a set $ \g = \{\{v_1,v_2\}, \dots, \{v_{n-1},v_n\}\} \subset E$; in this case we say that $\g$ joins $v_1$, $v_n$. A combinatorial graph $G = (V,E)$ is connected, if for any distinct $v,v' \in V$ there exists a path $\g$ in $G$ that joins $v$ with $v'$. A \emph{connected component} of a combinatorial graph $G$ is a maximal connected subgraph of $G$.

\subsection{Components of a graph} Given a graph $G=(V,E)$ we describe an algorithm that returns the connected components of $G$. 

\begin{lem}\label{lem:graph1}
There exists an algorithm that determines the components of $G$ in $O((\card{V})(\card{E})^2)$ time.
\end{lem}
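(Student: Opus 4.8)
The plan is to give a standard breadth-first (or depth-first) search style algorithm that grows one connected component at a time, and to bound the cost of the basic operations by $\card E$.

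First I would fix some data structures. Represent $G=(V,E)$ so that testing whether a given pair $\{v,v'\}$ lies in $E$, and iterating over all edges, both take $O(\card E)$ time (e.g.\ $E$ stored as a list). Maintain a set $W\subseteq V$ of vertices not yet assigned to any component. The outer loop is: while $W\neq\emptyset$, pick a vertex $v\in W$, start a new component $C=(\{v\},\emptyset)$, and then repeatedly enlarge $C$ as follows. Keep a ``frontier'' of vertices of $C$ whose incident edges have not yet been scanned; for such a vertex $u$, scan the edge list of $G$ once (cost $O(\card E)$) to find every edge $\{u,w\}\in E$; for each such edge add it to the edge set of $C$, and if $w\notin C$ add $w$ to $C$, remove it from $W$, and put it on the frontier. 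When the frontier empties, $C$ is a maximal connected subgraph, so it is a connected component; record it, and return to the outer loop.

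Next I would verify correctness and the time bound. Correctness is the usual invariant: at every stage $C$ is connected (each added vertex is joined by an edge to a vertex already in $C$), and when the frontier is empty every edge with an endpoint in $V(C)$ has both endpoints in $V(C)$, so no larger connected subgraph contains $C$; hence $C$ is exactly the connected component of its seed vertex, and the components partition $V$. For the running time: each vertex of $V$ is placed on a frontier exactly once over the whole execution, and processing a frontier vertex costs one full scan of the edge list, i.e.\ $O(\card E)$; this already gives $O((\card V)(\card E))$ just for the scans. Each edge found during a scan triggers at most an $O(\card E)$-time membership test of the form ``is $w$ already in $C$?'' (or, if one is less careful, ``is $\{u,w\}$ already recorded?''), and since each of the $\card E$ edges of $G$ is discovered at most twice, the total cost of these tests is $O((\card E)^2)$. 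Combining, and noting $\card E \ge \card V - c$ is not assumed, the dominant term overall is $O((\card V)(\card E)^2)$, as claimed. (One could of course do better with union--find or adjacency lists, but the crude bound above already suffices for our purposes, since in later sections $\card E$ is comparable to $\card V$.)

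The main obstacle is simply bookkeeping: being honest about how each elementary step (membership tests in $W$ and in the $V(C)$'s, locating incident edges, avoiding revisiting vertices) is implemented with the assumed representation of $G$, so that the advertised bound genuinely holds rather than being an artifact of hidden linear-time subroutines. I expect no conceptual difficulty — only the need to state the data structures precisely enough that the $O((\card V)(\card E)^2)$ accounting is unambiguous.
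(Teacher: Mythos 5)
Your proposal is correct and is essentially the same argument as the paper's: grow one component at a time by greedily absorbing edges incident to the current vertex set via naive scans of the edge list, then repeat on the remaining vertices, with the crude accounting giving $O((\card{V})(\card{E})^2)$. Your bookkeeping (frontier-based BFS, each vertex scanned once) in fact yields the slightly sharper bound $O(\card{V}\,\card{E}+\card{E}^2)$, which is subsumed by the stated one, so there is no gap.
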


\begin{proof}
We fix $v_0 \in V$ and let $(V_{1,1} , E_{1,1}) = (\{v_0\},\emptyset)$. Assuming that for some $i \leq \card{E}-1$ we have defined $(V_{1,i} , E_{1,i})$, there are two possible cases.
\begin{enumerate}
\item If there exists $e \in E \setminus E_{1,i}$ such that $e$ has an endpoint in $V_{1,i}$, then we set $V_{1,i+1} = V_{1,i}\cup e$ and $E_{1,i+1} = E_{1,i}\cup \{e\}$.
\item If there does not exist $e \in E \setminus E_{1,i}$ such that $e$ has an endpoint in $V_{1,i}$, then we stop the procedure and set $V_{1} = V_{1,i}$ and $E_{1} = E_{1,i}$.
\end{enumerate}
For each $i$, we make sure in $O(\card{V_1})$ that no vertex appears twice. The time needed for the calculation of $V_1$ is at most a constant multiple of
\[ \sum_{i=1}^{\card{E}_1} (\card{V_{1,i}})\card{E}  = O((\card{V_1})(\card{E})^2).\]

If $V_1 \neq V$, we replace $(V,E)$ by $(V\setminus V_1, E\setminus E_{1})$ and we repeat the process again to obtain $(V_2,E_2)$. We continue this way until we exhaust all points of $V$. Since the $V_i$'s are disjoint, the sum of their cardinalities is $\card V$. Thus, the total time needed for the algorithm to complete is at most a constant multiple of 
\[ O((\card{V_1})(\card{E})^2) + O((\card{V_2})(\card{E})^2) + \cdots = O((\card{V})(\card{E})^2). \qedhere\]
\end{proof}

\subsection{Two-to-one tours on connected graphs}\label{sec:tour}
Given a connected graph $G=(V,E)$, we describe here an algorithm that gives a tour over all edges of $G$ so that each edge is traversed exactly twice and once in each direction.

\begin{lem}\label{lem:tour}
Let $G=(V,E)$ be a finite connected graph and let $v_0\in V$. There exists an algorithm which in $ O((\card{V})(\card{E})^2)$ time produces a finite sequence $(a_i)_{i=1}^{2M+1}$ of points in $V$ with $M=\card{E}$ that satisfies the following properties.
\begin{enumerate}
\item We have $\{a_1,\dots,a_{2M+1}\} = V$ with $a_1=a_{2M+1} = v_0$.
\item For any $i\in\{1,\dots,2M\}$, $\{a_i,a_{i+1}\} \in E$. 
\item For any $e \in E$ there exist exactly two distinct $i,j \in \{1,\dots,2M\}$ such that
\[ \{a_i,a_{i+1}\} = \{a_j,a_{j+1}\} = e.\]
Moreover, $a_i=a_{j+1}$ and $a_j = a_{i+1}$.
\end{enumerate}
\end{lem}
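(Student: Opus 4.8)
The plan is to produce the desired two-to-one tour by a depth-first traversal of a spanning tree of $G$, augmented so that every non-tree edge is also traversed twice. First I would invoke Lemma~\ref{lem:graph1} (or rather the construction inside its proof) to build, in $O((\card V)(\card E)^2)$ time, a spanning tree $T=(V,E_T)$ of the connected graph $G$: starting from $v_0$, repeatedly adjoin an edge of $E\setminus E_T$ with exactly one endpoint among the vertices already collected, until all of $V$ is exhausted; connectedness of $G$ guarantees this reaches every vertex, and tracking parent pointers along the way costs no more than the stated bound. I would then root $T$ at $v_0$.

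Next I would describe the tour on $T$ recursively. Define a routine $\mathrm{Walk}(v)$ that, for each child $w$ of $v$ in $T$ (processed in some fixed order, say the order in which the children were discovered), appends $w$, then recursively runs $\mathrm{Walk}(w)$, then appends $v$ again. Calling $\mathrm{Walk}(v_0)$ after an initial output of $a_1=v_0$ produces a sequence that begins and ends at $v_0$, visits every vertex (since $T$ spans $V$), uses only edges of $T\subset E$ for consecutive pairs, and traverses each tree edge $\{v,w\}$ (with $v$ the parent) exactly twice — once as the pair $(v,w)$ on the way down and once as $(w,v)$ on the way back — which is precisely the ``once in each direction'' condition of item (3). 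The number of such consecutive pairs is $2\card{E_T}=2(n-1)$ where $n=\card V$.

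It remains to account for the edges of $E\setminus E_T$. For each such edge $e=\{v,w\}$, both endpoints appear in the tree walk; I would splice in, at the first occurrence of $v$ in the current sequence, the detour $v,w,v$ (i.e. insert $w$ and a repeated $v$ immediately after that occurrence of $v$). Each splice adds exactly two consecutive pairs $(v,w)$ and $(w,v)$, both in $E$, keeps the sequence a valid walk, does not disturb the already-traversed tree edges or previously spliced non-tree edges, and leaves the endpoints $a_1=a_{2M+1}=v_0$ unchanged. After handling all $\card E-\card{E_T}$ non-tree edges, every edge of $G$ is traversed exactly twice and once in each direction, the total number of consecutive pairs is $2\card{E_T}+2(\card E-\card{E_T})=2\card E=2M$, so the sequence has length $2M+1$, giving items (1), (2) and (3). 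The time cost is dominated by building the spanning tree and by the $O(\card E)$ splices, each of which can be located and performed in $O(\card V+\card E)$ time, so the whole algorithm runs within $O((\card V)(\card E)^2)$.

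The main obstacle is bookkeeping rather than any real difficulty: one must be careful that the splicing of non-tree edges is done at a vertex occurrence that already exists (so that the inserted pair is genuinely $\{v,w\}\in E$ and the walk stays connected), and that the ``two distinct indices $i,j$ with $a_i=a_{j+1}$, $a_j=a_{i+1}$'' requirement in item (3) is met — this is automatic for tree edges from the down/up structure of $\mathrm{Walk}$ and automatic for non-tree edges from the $v,w,v$ detour, but it does need to be checked that no edge is accidentally traversed a third time, which holds because each tree edge is touched only by the unique recursive call across it and each non-tree edge only by its single splice. I would also note that the crude bound $O((\card V)(\card E)^2)$ is stated to match Lemma~\ref{lem:graph1}; the traversal itself is in fact linear in the size of the output.
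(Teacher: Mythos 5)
Your proposal is correct, but it is organized differently from the paper's own argument. The paper never builds a spanning tree: it maintains a single invariant --- a closed walk based at $v_0$ that traverses each already-processed edge exactly twice, once in each direction --- and grows it one edge at a time, at each of the $M=\card{E}$ stages searching (in $O(\card{V}\card{E})$ time, using connectedness) for any unused edge $e$ incident to the current walk and splicing in the detour $a_{k,j},v,a_{k,j}$ across $e$, whether or not the far endpoint $v$ is new; this uniform loop handles what you call tree edges and chords by exactly the same mechanism and yields the $O(\card{V}(\card{E})^2)$ bound directly. Your route instead separates a backbone from the chords: you first extract a spanning tree via the Lemma~\ref{lem:graph1}-type search, generate its doubled Euler tour by a rooted DFS, and then splice each non-tree edge with the same $v,w,v$ detour the paper uses for every edge. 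Both arguments are sound; the paper's buys a shorter, single-invariant induction with no need for parent pointers or a recursion, while yours makes properties (1)--(3) transparent from the down/up structure of the tree walk and isolates where connectedness (and the time bottleneck) really enters, namely in the spanning-tree construction, with the tour itself then produced in time linear in its length --- an observation consistent with, but finer than, the stated $O(\card{V}(\card{E})^2)$ bound.
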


\begin{proof}
Set $a_{0,1}=v_0$ and $E_{0}= E$. 

Assume that for some odd $k \in \{0,1,\dots, M-1\}$ we have defined a set $E_k\subset E$ and a finite sequence $(a_{k,i})_{i=1}^{2k+1}$ with the following properties.
\begin{enumerate}
\item The set $\{a_{k,1},\dots,a_{k,2k+1}\} \subset V$ with $a_{k,1}=a_{k,2k+1} = v_0$.
\item For any $i\in\{1,\dots,2k\}$, $\{a_{k,i},a_{k,i+1}\} \in E\setminus E_k$. 
\item For any $e \in E \setminus E_k$ there exist exactly two distinct $i,j \in \{1,\dots,n-1\}$ such that
\[ \{a_{k,i},a_{k,i+1}\} = \{a_{k,j},a_{k,j+1}\} = e.\]
Moreover, $a_i=a_{j+1}$ and $a_j = a_{i+1}$.
\end{enumerate}

Find $e \in E_k$ and $j\in \{0,1,\dots, 2k+1\}$ such that $a_{k,j}\in e$. Since $G$ is assumed connected, such $j$ and $e$ exist and the search for them would take at most $O(\card{E}\card{V})$ time. Let $v$ be the unique element of $e\setminus \{a_{k,j}\}$. Define now $E_{k+1} = E_k \setminus \{e\}$ and 
\[ a_{k+1,i} =
\begin{cases}
a_{k,i} &\text{ if $1\leq k \leq j$}\\
v &\text{ if $i=j$}\\
a_{k,i-2} &\text{ if $j+2 \leq i \leq 2k+3$}
\end{cases}.\]
The construction of $E_{k+1}$ and the sequence $(a_{k+1,i})_{i=1}^{2k+3}$ takes $O(\card{E})$ time. By design and the inductive hypothesis, each edge in $E\setminus E_{k+1}$ is traversed exactly twice. In particular, the properties (1)--(3) above are true for $k+1$.

This process terminates when $k=M$. The total time required is a constant multiple of
\[ \card{E} \left( O(\card{E}) + O(\card{E}\card{V}) \right) = O(\card{V}(\card{E})^2).\]
For $i=1,\dots,2M+1$ set $a_i = a_{M,i}$. Note that $E_{M} = \emptyset$. By induction, every edge of $E = E\setminus E_{M}$ is traversed twice, once in each direction. Since we traverse every edge of $E$, we have that $\{a_1,\dots,a_{2M+1}\} = V$ and the proof is complete.
\end{proof}

\section{Construction of scales and nets}\label{sec:nets}

Given $\e>0$, we say that a set $X \subset \R^N$ is \emph{$\e$-separated} if for any $x,y \in X$ we have $|x-y|\geq \e$. Given a set $V\subset \R^N$ and $\e>0$, we say that a set $X\subset V$ is an \emph{$\e$-net} if it is a maximal $\e$-separated subset of $V$. 

Recall the definition of Hausdorff distance from Section \ref{sec:outline}. In the next lemma we describe an algorithm that gives nets for a given set $V$.

\begin{lem}\label{lem:nets}
Let $V\subset \R^N$ be a set of $n$ elements, let $\e>0$, let $X$ be an $\e$-net of $V$, and let $U = V \setminus X$. Assume that $U\neq \emptyset$. Then, in $O(n^2)$ time we can compute an integer $k\in\N$, a $(2^{-k}\e)$-net $X'$ of $V$, and a set $U' = V \setminus X'$ such that $X\subsetneq X' \subset V$ and
\begin{equation}\label{eq2}
2^{-k}\e \leq \dist_H(X,X') <  2^{1-k}\e.
\end{equation}
\end{lem}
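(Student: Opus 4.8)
The plan is to produce the new net $X'$ by repeatedly halving the scale, starting from $\e$, and to stop at the first scale where the Hausdorff distance from $X$ drops below the required threshold. Concretely, for each integer $j\geq 1$ set $\e_j = 2^{-j}\e$ and greedily extend $X$ to a maximal $\e_j$-separated subset $X_j$ of $V$ by the following device: list the points of $U$ (in some fixed order), and add a point $u$ to the current set whenever $u$ is at distance $\geq \e_j$ from every point already selected. Since $X$ is itself $\e$-separated and $\e_j<\e$, the set $X$ is $\e_j$-separated, so this greedy process does produce a maximal $\e_j$-separated set $X_j$ with $X\subset X_j\subset V$. Computing $X_j$ requires, for each of the at most $n$ candidate points, a comparison against at most $n$ previously chosen points, hence $O(n^2)$ time per scale $j$.

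Next I would identify the correct stopping scale. Because $U\neq\emptyset$, pick any $u_0\in U$; by maximality of $X$ there is a point of $X$ within distance $\e$ of $u_0$, but $u_0\notin X$, so in fact $\dist_H(X, V)>0$ and more importantly there is a largest integer $k\geq 1$ for which $\dist_H(X,X_k)\geq 2^{-k}\e$ fails to be small enough — I will instead choose $k$ to be the \emph{smallest} integer $j\geq 1$ with $\dist_H(X,X_j) < 2^{1-j}\e$, and set $X' = X_k$, $U' = V\setminus X'$. Such a $k$ exists and is not too large: once $\e_j$ is smaller than the minimal pairwise distance among points of $V$, the net $X_j$ equals all of $V$, and then $\dist_H(X,X_j) = \dist_H(X,V)$ which is a fixed finite number, so for all large $j$ the strict inequality $\dist_H(X,X_j) < 2^{1-j}\e$ eventually fails — wait, that is the wrong direction, so let me instead argue that $\dist_H(X,X_j)$ is non-increasing in $j$ (since $X_j\subset X_{j+1}$, every point of $V$ is closer to $X_{j+1}$ than to $X_j$, while $X\subset X_j$ for all $j$), whereas $2^{1-j}\e\to 0$; hence there is a well-defined first index $k$ where the inequality holds, and I must separately check $k$ is reached after only $O(1)$-in-$j$-but-really-$O(\log(\text{aspect ratio}))$ many halvings. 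To keep the total running time $O(n^2)$ rather than $O(n^2\log(\cdot))$, the key observation is that as soon as $X_j = X_{j+1}$ the Hausdorff distance stabilizes at $\dist_H(X,V)$, and the number of \emph{distinct} nets in the chain $X = X_0' \subsetneq \cdots$ is at most $n$; one can detect stabilization and, combined with the crude bound $\dist_H(X,V)\leq \diam V \leq$ a constant multiple of $\e$ coming from $X$ being an $\e$-net of a set that $U$ is nonempty in — actually $\dist_H(X,V) < \e$ since $X$ is a maximal $\e$-net — so the first halving already gives $2^{-1}\e \leq \dist_H(X,X_1) < \e = 2^{1-1}\e$ unless $\dist_H(X,X_1) < 2^{-1}\e$, in which case we continue; each subsequent step at most halves the distance, so $k$ is reached within $O(1)$ further halvings relative to where the distance lands.

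Finally I would verify the two-sided bound \eqref{eq2}. The upper bound $\dist_H(X,X') < 2^{1-k}\e$ is exactly the stopping criterion. For the lower bound $\dist_H(X,X') \geq 2^{-k}\e$: since $k$ is the \emph{first} index with $\dist_H(X,X_j)<2^{1-j}\e$, minimality gives $\dist_H(X,X_{k-1}) \geq 2^{1-(k-1)}\e = 2^{2-k}\e$ (when $k\geq 2$), and monotonicity would run the wrong way, so instead I use the intrinsic geometry: $X'=X_k$ is $(2^{-k}\e)$-separated and contains $X$ properly — properly because $U\neq\emptyset$ forces at least one new point to be added at a sufficiently fine scale, and if no new point were added at scale $k$ then $X_k=X$ would not be a maximal $(2^{-k}\e)$-net unless $V=X$, contradicting $U\neq\emptyset$ once $k$ is large enough; hence there is $x'\in X'\setminus X$, and since $X'$ is $(2^{-k}\e)$-separated while $x'\notin X$... this only gives that $x'$ is $\geq 2^{-k}\e$ from other points of $X'$, not from $X$. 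The clean way: by definition of the greedy construction, $x'\in X_k\setminus X$ was added precisely because it was at distance $\geq 2^{-k}\e$ from all previously chosen points, in particular from all of $X$ (which is chosen first); therefore $\inf_{y\in X}|x'-y|\geq 2^{-k}\e$, which gives $\dist_H(X,X')\geq 2^{-k}\e$. The main obstacle is the bookkeeping for the running time: one must argue that the total work across all halving scales $j=1,\dots,k$ is still $O(n^2)$, which follows because each net $X_j$ in the chain that is strictly larger than its predecessor costs $O(n^2)$, there are at most $n$ such strict enlargements, and once the net stops growing one detects it and halts immediately — but a naive reading gives $O(n^2 k)$, so the honest statement is that $k = O(\log(\diam V / \min\text{-dist}))$ and one either absorbs this into the model of computation or notes $k\le n$ and that consecutive equal nets are detected in $O(n)$ time, collapsing the bound to $O(n^2)$.
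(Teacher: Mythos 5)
Your stopping rule is the weak point, and it is genuinely wrong as stated. You take $k$ to be the smallest $j$ with $\dist_H(X,X_j)<2^{1-j}\e$, but this can fire while $X_j=X$. Concretely, take $\e=1$, $V=\{0,\,0.4,\,1\}\subset\R$, $X=\{0,1\}$, $U=\{0.4\}$: at scale $2^{-1}\e$ the greedy pass adds nothing (since $|0.4-0|<\tfrac12$), so $X_1=X$ and $\dist_H(X,X_1)=0<2^{0}\e$, and your rule outputs $k=1$, $X'=X$. This violates both $X\subsetneq X'$ and the lower bound in \eqref{eq2}. Your attempted rescue --- ``if no new point were added at scale $k$ then $X_k=X$ would not be a maximal $(2^{-k}\e)$-net unless $V=X$'' --- is false: in the example $X=\{0,1\}$ \emph{is} a maximal $\tfrac12$-separated subset of $V$ even though $V\neq X$. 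The index $k$ cannot be read off from $\dist_H(X,X_j)$ alone; it has to be tied to the quantity $d=\max_{u\in U}\min_{x\in X}|u-x|$, or equivalently you must stop at the first scale at which the net strictly grows.

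Even with that fixed, your running-time argument does not give $O(n^2)$. Each scale you scan costs on the order of $n\,\card{X}$ comparisons, and the number of scales until the first new point appears is about $\log_2(\e/d)$, which is not bounded by any function of $n$ (with $V=\{0,\delta,1\}$ and $\delta$ tiny, $n=3$ but the number of halvings is $\log_2(1/\delta)$). Your patches do not close this: the chain contains at most $n$ \emph{distinct} nets, but consecutive scales can repeat the same net arbitrarily many times, so ``$k\le n$'' is false, and detecting repeats in $O(n)$ per scale still leaves a factor equal to the number of scales. The paper avoids the loop over scales entirely: it first computes $d$ in $O(n^2)$ time, defines $k$ as the least integer with $2^{-k}\e\le d$ (so that $2^{1-k}\e>d$ by minimality), and then runs the greedy insertion once, at that single scale, in $O(n^2)$ time. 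The lower bound of \eqref{eq2} then follows exactly by your separation argument (some $u\in U$ has $\dist(u,X)=d\ge 2^{-k}\e$, forcing $X'\setminus X\neq\emptyset$, and any such new point is $2^{-k}\e$-separated from $X\subset X'$), while the upper bound holds because any $x'\in X'$ with $\dist(x',X)\ge 2^{1-k}\e$ would lie in $U$ and force $d\ge 2^{1-k}\e$, contradicting the minimality of $k$.
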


\begin{proof}
Assume that $X=\{x_{1},\dots,x_{l}\}$ and $U = \{u_{1},\dots,u_{n-l}\}$, with $l\in \{1,\dots,n\}$, such that $U = V\setminus X$. We first calculate
\[ d = \max_{i=1,\dots, n-l} \min_{j=1,\dots,l}|x_{j}-u_{i}|.\]
Since both sets $X,U$ have cardinalities at most $n$, the computation of $d$ takes at most $O(n^2)$ time. Let $k$ be the smallest integer such that $2^{-k}\e \leq d$; the computation of $k$ clearly takes $O(1)$ time. By minimality of $k$ we have that $2^{1-k}\e > d$.

We initiate the construction by setting $a_{i}=x_{i}$ for $i\in\{1,\dots,l\}$,  $c_{j}=u_{j}$ for $j\in\{1,\dots,n-l\}$. We also set $X'_{1} = \{a_{1},\dots,a_{l}\}$, $U_1' = \emptyset$, and $U_{1}'' = \{c_{1},\dots,c_{n-l}\}$. This requires $O(n)$ time.

Inductively, assume that after $m$ steps, with $m\in\{1,\dots,n-l\}$, we have defined three disjoint sets
\[ X'_{m} = \{a_{1}, \dots,a_{p}\}, \quad U_{m}'=\{b_{1},\dots,b_{l+m-p-1}\}, \quad U_{m}''=\{c_{1},\dots,c_{n-l-m+1}\}\]
such that $(X_{m}' \setminus X) \cup U_{m}' \cup U_{m}''= U$. Conventionally, if $l+m-p-1=0$, then we assume $U_{m}' = \emptyset$. If 
\begin{equation}\label{eq1}
\min_{i=1,\dots,p}|c_{n-l-m+1}-a_{i}| \geq 2^{-k}\e,
\end{equation}
then we set $a_{p+1}=c_{n-l-m+1}$ and
\begin{align*}
X_{m+1}' = X_{m}'\cup\{a_{p+1}\}, \quad U_{m+1}' = U_{m}', \quad U_{m+1}'' = U_m'' \setminus \{c_{n-l-m+1}\}.
\end{align*}
Otherwise, we set $b_{l+m-p}=c_{n-l-m+1}$ and
\begin{align*}
X_{m+1}' = X_{m}', \quad U_{m+1}' = U_{m}'\cup\{b_{l+m-p}\}, \quad U_{m+1}'' = U_m'' \setminus \{c_{n-l-m+1}\}.
\end{align*}
Note that the calculation of (\ref{eq1}) as well as the definition of sets $X_{m+1}', U_{m+1}', U_{m+1}''$ requires $O(n)$ time to complete.

Set $X' = X_{n-l+1}$ and $U'=U_{n-l+1}'$. The construction of $X',U'$ takes $O(n^2)$ time to complete. It remains to verify the conclusions of the lemma.

By the definition of $d$, there exists at least one element $u$ in $U$ such that $|x-u| \geq 2^{-k}\e$ for all $x\in X$. Therefore, the set $X' \setminus X$ is nonempty. Moreover, it is clear that $X'$ is a $(2^{-k}\e)$-separated set. To show that $X'$ is maximal, assume for a contradiction that there exists $v \in V$ such that $|x-v| \geq 2^{-k}\e$ for all $x\in X'$. Then, there exists $m\in\{1,\dots,n-l\}$ such that $x = c_{n-l-m+1}$. But then, by (\ref{eq1}), $x$ would end up in $X'$ which is false. The lower bound of (\ref{eq2}) follows from the $2^{-k}\e$-separability of $X'$ and the fact that $X'\setminus X \neq \emptyset$. For the upper bound of (\ref{eq2}) assume for a contradiction that there exists $x'\in X'$ such that $|x' -x| \geq 2^{1-k}\e$ for all $x\in X$. Then, clearly $x' \in V\setminus X \subset U$ and it follows that $d\geq 2^{1-k}\e$. But that contradicts the minimality of $k$ since $k-1$ could also work.
\end{proof}

\section{Flatness modules and flat pairs}\label{sec:flat}

In this section we examine a notion of local flatness introduced by Jones \cite{Jones-TST} and further developed by Schul \cite{Schul-Hilbert} and Badger and Schul \cite{BS3}. For the rest of this section we fix $C_0=300$, a finite set $V\subset \R^N$ consisting of $n$ elements, an $\e$-net $X\subset V$, and a $(2^{-k}\e)$-net $X'\subset V$ such that $X\subset X'$ and 
\[ 2^{-k}\e \leq \dist_H(X,X') <  2^{1-k}\e.\]

In \cite{Schul-Hilbert} Schul defined for each $v\in X$ a number that measures the (normalized) thickness of the thinnest cylinder that contains the set 
\[ B_{v,X'}:= B(v,C_0 2^{-k}\e) \cap X'.\] 
In our case, the numbers that we define measure the same thing with one difference; we will only consider a \emph{finite} set of cylinders that contain $B_{v,X'}$ (as opposed to \emph{all} cylinders) and we will choose the thinnest one. Nevertheless, the two flatness modules are close to each other; see Lemma \ref{lem:flat-comp}.

\subsection{Flatness modules}
For the rest of this section, $L$ is an integer with $L\geq 40C_0\sqrt{N-1}$ and $\mathcal{G} = \{-2L,\dots,2L\}^{N}$. Given two distinct points $y_1,y_2 \in \R^N$, we denote by $\ell(y_1,y_2)$ the unique line that passes through $y_1,y_2$ and define the normalized distance
\[ D(y_1,y_2) = \max_{z\in B_{v,X'}} 2^{k}\e^{-1} \dist(z, \ell(y_1,y_2)).\]
Given $v\in X$, define 
\[ \a_{v,X'} = \min_{x\in \phi(\mathcal{G})}\min_{y\in \phi(\mathcal{G})\setminus\{x\}} D(x,y)\]
where
$\phi(z) = C_02^{-k}\e L^{-1}z +v$.
Essentially, $\a_{v,X'} $ measures the smallest distance of the set $B_{v,X'}$ to all lines going through a fine grid of the cube $v+C_02^{1-k}\e[-1,1]^N$ normalized by $2^{-k}\e$. We set $\ell_{v,X'}$ to be the line passing through points $x,y \in \phi(\mathcal{G})$ with $x\neq y$ that minimize the quantity $D(x,y)$.

\begin{lem}\label{lem:alpha}
Number $\a_{v,X'}$ and line $\ell_{v,X'}$ can be computed in $O(\card X')$ time.
\end{lem}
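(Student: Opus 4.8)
The plan is to exploit the fact that $L$ and $N$ are fixed (they depend only on $N$, not on the input set $V$), so that the grid $\mathcal{G} = \{-2L,\dots,2L\}^{N}$, and hence its image $\phi(\mathcal{G})$, has cardinality $(4L+1)^{N}$, a constant independent of $n$. Since $\phi(z) = C_02^{-k}\e L^{-1}z+v$ is an injective affine map (its linear part is a nonzero scalar multiple of the identity), we have $\card\phi(\mathcal{G}) = (4L+1)^{N} = O(1)$ as well. Consequently the only quantities appearing in the definition of $\a_{v,X'}$ whose sizes grow with the input are $\card X'$ and $\card B_{v,X'}$, both bounded by $n$, and the proof amounts to checking that each elementary operation touching these sets costs $O(1)$.

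First I would compute $B_{v,X'} = B(v,C_0 2^{-k}\e)\cap X'$ by scanning the at most $n$ points of $X'$ and recording those within distance $C_0 2^{-k}\e$ of $v$; each such comparison is a fixed computation in $\R^{N}$ and hence $O(1)$, so this step takes $O(n)$ time and yields a set with $\card B_{v,X'}\le n$. Note $v\in X\subset X'$ lies in this set, so it is nonempty and the maximum defining $D$ below is over a nonempty finite set. Next I would list the points of $\phi(\mathcal{G})$: there are $(4L+1)^{N}=O(1)$ of them and each $\phi(z)$ costs $O(1)$ to evaluate, so this is $O(1)$ work.

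Then, for each of the $\binom{(4L+1)^{N}}{2} = O(1)$ unordered pairs of distinct points $x,y\in\phi(\mathcal{G})$, I would compute
\[
D(x,y) = \max_{z\in B_{v,X'}} 2^{k}\e^{-1}\dist(z,\ell(x,y))
\]
by evaluating, for each $z\in B_{v,X'}$, the point-to-line distance $\dist(z,\ell(x,y))$ — a fixed closed-form expression in the coordinates of $x,y,z\in\R^{N}$ (inner products and norms of $N$-dimensional vectors), hence $O(1)$ — and taking the maximum over the at most $n$ points $z$. Thus each $D(x,y)$ is computed in $O(n)$ time. Taking the minimum of these $O(1)$ numbers produces $\a_{v,X'}$ in $O(n)\cdot O(1) = O(n)$ total time, and recording which pair $(x,y)$ attains the minimum delivers the line $\ell_{v,X'}$ at no extra asymptotic cost.

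There is no genuine obstacle here beyond bookkeeping; the one point that must be made explicit is that $L = L(N)$ and $N$ are to be treated as constants (they do not depend on the input set $V$), which is precisely what collapses the a priori bound $O((\card\mathcal{G})^{2}\cdot n)$ to $O(n)$. The only mild care needed is the observation that $B_{v,X'}$ is nonempty and that the distance from a point to a line in $\R^{N}$ admits an $O(1)$ formula, so that every arithmetic step in the nested minimization is constant-time.
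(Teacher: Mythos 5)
Your proof is correct and follows essentially the same route as the paper: build $B_{v,X'}$ by a single $O(n)$ scan of $X'$, then exploit that $\phi(\mathcal{G})$ has cardinality depending only on $N$ and $L$ so the minimization over pairs adds no more than a constant factor. The only cosmetic difference is that the paper treats the pairwise minimization as $O(1)$ after the scan (implicitly because $B_{v,X'}$ has bounded cardinality, $X'$ being $(2^{-k}\e)$-separated), whereas you conservatively bound it by $O(n)$ per pair, which still yields the claimed $O(n)$ total.
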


\begin{proof}
We start by constructing the sets $B_{v,X'}$. We check over all points $u\in X'$ whether or not $|v-u| < C_0 2^{-k}\e$. Then the total time needed to construct the set $B_{v,X'}$ is $O(\card X')$. Since the set $\mathcal{G}$ is finite and independent of $n$, the calculation of $\a_{v,X'}$ and of $\ell_{v,X'}$ takes $O(1)$ time. This completes the proof of the lemma.
\end{proof}

\begin{rem}\label{rem:sharp}
Since for each $v\in X$ we have to go through the entire set $X'$ in order to construct $B_{v,X'}$, the computations of the preceding lemma are sharp; that is, there exists universal $C>0$ such that it takes at least $C\card{X'}$ time to compute $B_{v,X'}$.
\end{rem}

\begin{lem}\label{lem:flat-comp}
If there exists a line $\ell$ such that $\max_{x\in B_{v,X'}} \dist(x,\ell) \leq \frac1{20}2^{-k}\e$, then $\a_{v,X'} \leq 1/16$.
\end{lem}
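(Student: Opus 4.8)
The plan is to exhibit two distinct points $x,y\in\phi(\mathcal{G})$ with $D(x,y)\le 1/16$; since $\a_{v,X'}$ is by definition the minimum of $D$ over all such pairs, this proves the lemma. Write $c:=2^{-k}\e$ and let $\ell$ be the line given by the hypothesis. First note that $v\in B_{v,X'}$, because $v\in X\subset X'$ and $v\in B(v,C_0c)$; hence $\dist(v,\ell)\le c/20$. Thus the foot $p$ of the perpendicular from $v$ to $\ell$ satisfies $|p-v|\le c/20$, so $p$ lies deep in the interior of the cube $Q:=v+C_02^{1-k}\e[-1,1]^N$ (which is centered at $v$ with half side $2C_0c\gg c/20$), and the line $\ell$ meets $\partial Q$ in two distinct points $q_-,q_+$ with $[q_-,q_+]=\ell\cap Q$.

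The main step is to round $q_-$ and $q_+$ to the grid $\phi(\mathcal{G})$ without paying the full covering radius of the lattice. Each $q_\pm$ lies on a face of $Q$, so at least one of its coordinates equals $v_j\pm 2C_0c$; this is exactly the $j$-th coordinate of $\phi(z)$ for $z_j=\pm2L$, hence already an admissible grid value in that coordinate. Rounding only the remaining $\le N-1$ coordinates of $q_\pm$ to nearest grid values — which stay inside the index range $\{-2L,\dots,2L\}$ since $q_\pm\in Q$ — therefore produces points $x,y\in\phi(\mathcal{G})$ with
\[ |x-q_-|\le \tfrac{\sqrt{N-1}}{2}\,h,\qquad |y-q_+|\le \tfrac{\sqrt{N-1}}{2}\,h,\qquad h:=C_0cL^{-1}. \]
This is where the hypothesis $L\ge 40C_0\sqrt{N-1}$ is used: it yields precisely $\tfrac{\sqrt{N-1}}{2}\,h\le c/80$. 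Since $[q_-,q_+]$ has length comparable to the side of $Q$ while $c/80$ is tiny, $x\ne y$.

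Next I would bound $\dist\big(z,\ell(x,y)\big)$ for an arbitrary $z\in B_{v,X'}$. Let $q:=\pi_\ell(z)$. Then $|q-v|\le \dist(z,\ell)+|z-v|\le c/20+C_0c<2C_0c$, so $q\in Q$ and hence $q$ lies on the segment $[q_-,q_+]$; write $q=(1-t)q_-+tq_+$ with $t\in[0,1]$. The point $\bar q:=(1-t)x+ty$ lies on the line $\ell(x,y)$, and by convexity of the norm
\[ |q-\bar q|\le (1-t)\,|q_--x|+t\,|q_+-y|\le \tfrac{\sqrt{N-1}}{2}\,h\le c/80, \]
with no lever-arm amplification at all. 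Therefore $\dist\big(z,\ell(x,y)\big)\le |z-\bar q|\le \dist(z,\ell)+|q-\bar q|\le c/20+c/80=c/16$. Taking the supremum over $z\in B_{v,X'}$ gives $D(x,y)\le 1/16$, so $\a_{v,X'}\le 1/16$.

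The convexity estimate above is essentially free; the part I would be careful about is the geometric bookkeeping that makes it applicable — that each $q_\pm$ lies on a face of $Q$ so that one coordinate is free when rounding, that the rounded points are honest elements of $\phi(\mathcal{G})$, that $\pi_\ell(z)$ really lies between $q_-$ and $q_+$ rather than past one of them, and that the numbers $c/20$, $\tfrac{\sqrt{N-1}}{2}h\le c/80$ and their sum $c/16$ fit together exactly. All of these rely only on $C_0$ (and the half side $2C_0c$ of $Q$) being generous relative to the radius $C_0c$ of $B_{v,X'}$ and on $p$ being almost at the center of $Q$, so the real obstacle is bookkeeping, not any genuine geometric difficulty.
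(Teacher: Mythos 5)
Your proof is correct and follows essentially the same route as the paper: intersect the (translated) line with the boundary of the cube $v+C_02^{1-k}\e[-1,1]^N$, round the two intersection points to $\phi(\mathcal{G})$ with error at most $\tfrac12\sqrt{N-1}\,C_02^{-k}\e L^{-1}\le \tfrac1{80}2^{-k}\e$, and conclude $D(x,y)\le \tfrac1{20}+\tfrac1{80}=\tfrac1{16}$. The only difference is that you spell out details the paper leaves terse (using $v\in B_{v,X'}$ in place of the paper's ``without loss of generality'' reduction, and the convex-interpolation argument showing the endpoint perturbation causes no lever-arm amplification), which is a welcome clarification rather than a different method.
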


\begin{proof}
Suppose that $\ell$ is a line with $\max_{x\in B_{v,X'}} \dist(x,\ell) \leq \frac1{20}2^{-k}\e$. Without loss of generality, we may assume that $\ell$ intersects the cube $v+ C_02^{-k}\e[-1, 1]^N$; otherwise we replace $\ell$ by a line parallel to $\ell$ that intersects the boundary of the cube and is closer to the set $B_{v,X'}$. Then, $\ell$ intersects the boundary of the cube $v+ C_02^{1-k}\e[-1, 1]^N$ on two points $p_1,p_2$ which have distance at least $2^{3/2}C_02^{-k}$. Now, we can find $p_1',p_2' \in \phi(\mathcal{G})$ such that for $i=1,2$
\[ |p_i-p_i'| \leq \tfrac12\sqrt{N-1}C_02^{-k}\e L^{-1} \leq \tfrac1{80}2^{-k}\e.\]
By the choice of $L$, we have that $p_1'\neq p_2'$.

Fix $x\in B_{v,X'}$. Then, by the choice of $L$,
\[ \dist(x,\ell(p_1',p_2')) \leq \dist(x,\ell) + \max_{i=1,2}|p_i-p_i'|  \leq \tfrac1{16}2^{-k}\e.\]
Therefore, $\a_{v,X'} \leq 2^{k}\e^{-1}\max_{x\in B_{v,X'}}\dist(x,\ell(p_1',p_2')) \leq 1/16$.
\end{proof}

\subsection{Flat Pairs}\label{sec:flatpairs}

Following \cite[Definition 2.4]{BNV}, for each $v\in X$, we define the set of \emph{flat pairs} $\Flat(v,X,X')$ to be the set of $\{v,v'\} \subset X$ such that
\begin{enumerate}
\item $\e \leq |v-v'| < C_02^{-k-1}\e$,
\item $\a_{v,X'}< 1/16$ and there exists an orientation of the line $\ell_{v,X'}$ such that $v'$ is the first point of $X \cap B(v, C_0 2^{-k-1}\e)$ to the left or to the right of $v$ with respect to the given orientation.
\end{enumerate}

Note that if $k$ is large, then the first condition fails immediately and no flat pairs exist. Here and for the rest of the paper, given a point $x \in \R^N$, we denote the first coordinate of $x$ by $\langle x\rangle_1$.

\begin{lem}\label{lem:flat}
The set $\Flat(v,X,X')$ can be computed in $O(\card X')$ time.
\end{lem}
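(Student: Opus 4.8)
The plan is to follow the definition of $\Flat(v,X,X')$ directly and bound the cost of checking each clause. First I would invoke Lemma \ref{lem:alpha} to compute $\a_{v,X'}$ and the line $\ell_{v,X'}$ in $O(n)$ time; as a byproduct of that computation we also have the set $B_{v,X'} = B(v,C_0 2^{-k}\e)\cap X'$ in hand. If $\a_{v,X'} \geq 1/16$, then condition (2) fails for every candidate $v'$ and we return $\Flat(v,X,X') = \emptyset$ in $O(1)$ additional time, so assume $\a_{v,X'} < 1/16$. We also check whether $\e \leq C_0 2^{-k-1}\e$, i.e.\ whether $k$ is small enough that condition (1) can possibly hold; if not, again return $\emptyset$.

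Next I would assemble the candidate set $S = X \cap B(v, C_0 2^{-k-1}\e)$ by scanning all points of $X$ (at most $n$ of them) and keeping those within distance $C_0 2^{-k-1}\e$ of $v$; simultaneously discard any point with $|v-v'| < \e$, since condition (1) also requires $|v-v'| \geq \e$. This takes $O(n)$ time. Now fix an orientation of $\ell_{v,X'}$, say a unit direction vector $u$ (computable in $O(1)$ time from the two grid points defining $\ell_{v,X'}$). For each $v' \in S$ compute the signed scalar $\langle v'-v, u\rangle$; this is $O(1)$ per point, hence $O(n)$ total. The "first point to the right of $v$" is the $v' \in S$ with the smallest \emph{positive} value of this scalar, and the "first point to the left" is the one with the largest negative value — each is found by a single linear pass through $S$ in $O(n)$ time. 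Output these (at most) two pairs as $\Flat(v,X,X')$.

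The main thing to be careful about — rather than a genuine obstacle — is that the definition of flat pair speaks of the first point of $X\cap B(v,C_0 2^{-k-1}\e)$ along $\ell_{v,X'}$, which I interpret as the order induced by orthogonal projection onto $\ell_{v,X'}$ (equivalently, the order of the scalars $\langle v'-v,u\rangle$); with this reading the two extremal searches above are exactly what is required, and the whole procedure is a constant number of $O(n)$ passes plus the $O(n)$ call to Lemma \ref{lem:alpha}, giving the claimed $O(n)$ bound. One should also note the degenerate cases (no point of $S$ has positive projection, or none has negative projection), in which the corresponding side contributes no flat pair; these are handled automatically by the linear scan returning "none," and do not affect the time bound.
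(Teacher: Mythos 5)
Your proposal is correct and follows essentially the same route as the paper: compute $\a_{v,X'}$ and $\ell_{v,X'}$ via Lemma \ref{lem:alpha} in $O(n)$ time, build $X\cap B(v,C_02^{-k-1}\e)$ by a linear scan, and select the nearest candidate on each side of $v$ along the line. Your use of the signed inner products $\langle v'-v,u\rangle$ is just a coordinate-free version of the paper's isometry $\Psi$ onto the $x_1$-axis, and your linear min/max scan replaces the paper's observation that $\e$-separation makes the candidate set of size $O(1)$; both give the claimed $O(n)$ bound.
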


\begin{proof}
We first compute the set
\[B'_{v,X}:=X \cap B(v, C_02^{-k-1}\e). \]
and as in Lemma \ref{lem:alpha}, it is easy to see that $B'_{v,X}$ can be computed in $O(\card X')$ time. Next we compute $\a_{v,X'}$ which also takes $O(\card X')$ time.

If $B'_{v,X}$ contains only $v$, or if $\a_{v,X'} > 1/16$ then we set $\Flat(v,X,X') = \emptyset$. 

Assume now that $B'_{v,X}$ contains more than $v$ and that $\a_{v,X'} \leq 1/16$. Having the line $\ell_{v,X'}$ from the previous subsection, we apply an isometric affine transformation 
\[ \Psi : \R^N \to \R^N,\qquad \Psi(x) = A(x-v)\]
such that $A$ is an orthogonal matrix, $\Psi(v)=0$, and $\Psi$ maps the line generated by $\ell_{v,X'}$ onto the line passing trough the origin that is parallel to the $x_1$-axis. The matrix $A$ can be generated in $O(1)$ time. We compute the set $\{\langle \Psi(v')\rangle_1 : v' \in B'_{v,X}\}$ and enumerate them from lowest to highest. By \cite[Lemma 8.3]{BS3} no two first coordinates can be equal. 
\begin{itemize}
\item If there exists $v' \in B'_{v,X}$ with $\langle \Psi(v')\rangle_1<0$, then choose the point $v' \in B'_{v,X'}$ with the highest first coordinate that is negative, and place $\{v,v'\}$ in $\Flat(v,X,X')$.
\item If there exists $v' \in B'_{v,X}$ with $\langle \Psi(v')\rangle_1>0$, then choose the point $v' \in B'_{v,X}$ with the lowest first coordinate that is positive, and place $\{v,v'\}$ in $\Flat(v,X,X')$.
\end{itemize}
Since $\a_{v,X'} \leq 1/16$ and since $B'_{v,X'}\setminus \{v\}$ is nonempty, at least one of the two cases above must hold. If both hold, then we perform both actions. 

Since $X$ is $\e$-separated, the cardinality of $B'_{v,X}$ depends only on $N$, and the above computations can be done in $O(1)$ time. Hence, the computation of $\Flat(v,X,X')$ can be done in $O(\card X')$ time and the proof is complete.
\end{proof}

\section{ATSP algorithm and proof of Theorem \ref{thm:main}}\label{sec:graphs}

Before starting the algorithm, we calculate $R_0 = 5\max\{|v| : v\in V\}$. This takes $O(n)$ time. Here and for the rest of this section, all constants depend only on the dimension $N$.

\subsection{Step 1}
We set $n_1=1$, $V_1=\{v_1\}$, and $U_1 = V\setminus \{v_1\}$. These definitions take $O(n)$ time. If $U_1=\emptyset$, then we move to the Final Step; see Section \ref{sec:finalstep}. Otherwise, we apply the algorithm of Lemma \ref{lem:nets} for $X=V_1$ and $\e=2^{-1}R_0$ and we obtain in $O(n^2)$ time an integer $n_2>1$ and a $(2^{-n_2}R_0)$-net $V_2$ along with a set $U_2 = V\setminus V_2$. Now, we apply the algorithm of Lemma \ref{lem:alpha} to obtain in $O(n)$ time a number $\alpha_{v_1,V_2}$. If $\alpha_{v_1,V_2}> 1/16$, then we set $\mathcal{N}_1 = \{v_1\}$ and $\mathcal{F}_1 = \emptyset$. Otherwise, we set $\mathcal{F}_1 = \{v_1\}$ and $\mathcal{N}_1 = \emptyset$. We also set $E_1 = \emptyset$ and $G_1 = (V_1,E_1)$. Step 1 takes $O(n^2)$ time.

\subsection{Step $k+1$}

Suppose that for some $k\in\N$ we have produced
\begin{itemize}
\item an integer $n_k\geq 1$, 
\item a $(2^{-n_k}R_0)$-net $V_k$ and a nonempty set $U_k = V\setminus V_k$, 
\item two disjoint sets $\mathcal{N}_k,\mathcal{F}_k \subset V_k$ with $\mathcal{N}_k\cup \mathcal{F}_k = V_k$, 
\item a connected graph $G_k = (V_k, E_k)$ such that $\card{E}_k \leq 2 \card{V_k}$,
\item an integer $n_{k+1}> n_k$, and 
\item a $(2^{-n_{k+1}}R_0)$-net $V_{k+1}$ and a set $U_{k+1} = V\setminus V_{k+1}$.
\end{itemize}

If $U_{k+1} = \emptyset$, then this paragraph can be skipped. Otherwise, we apply the algorithm of Lemma \ref{lem:nets} with $X=V_{k+1}$ and $\e=2^{-n_{k+1}}R_0$ and we obtain in $O(n^2)$ time an integer $n_{k+2}>n_{k+1}$, a $(2^{-n_{k+2}}R_0)$-net that contains $V_{k+1}$, and a set $U_{k+2} = V\setminus V_{k+2}$.

Next, we produce three new families $\mathcal{F}_{k+1},\mathcal{N}_{k+1}, E_{k+1}$.

\subsubsection{Families $\mathcal{N}_{k+1}$ and $\mathcal{F}_{k+1}$} We perform this part of Step $k+1$ only if $U_{k+1}\neq \emptyset$, otherwise we omit it. For each $v \in V_{k+1}$ we compute $\a_{v,V_{k+2}}$ and 
\begin{itemize}
\item if $\a_{v,V_{k+2}} > 1/16$, then we place $v$ in $\mathcal{N}_{k+1}$;
\item if $\a_{v,V_{k+2}} \leq 1/16$, 
then we place $v$ in $\mathcal{F}_{k+1}$.
\end{itemize}
By Lemma \ref{lem:alpha}, the constructions of sets $\mathcal{N}_{k+1},\mathcal{F}_{k+1}$, takes a total of $O(n^2)$ time.

\bigskip

It remains to construct $E_{k+1}$. The new edges will come from three sources; from old edges in $E_k$, around points in $\mathcal{F}_k$, and around points in $\mathcal{N}_k$.

\subsubsection{Edges coming from $E_k$} Recall from Section \ref{sec:flat} that $C_0=300$. For each element $e = \{u,u'\} \in E_k$, we compute $|u-u'|$ and the two numbers $\a_{u,V_{k+1}}$ and $\a_{u',V_{k+1}}$. 

If $|u-u'| \geq C_0 2^{-n_{k+1}-1}R_0$, or if both $\a_{u,V_{k+1}}$ and $\a_{u',V_{k+1}}$ are more than $1/16$, then we place $e$ in $E_{k+1}^{(1)}$. This case takes $O(n)$ time. We set $V_{k+1}(e) = \emptyset$; this set will not play a role in the algorithm but we define it for consistency.

If $|u-u'| < C_0 2^{-n_{k+1}-1}R_0$ and at least one of $\a_{u,V_{k+1}}$, $\a_{u',V_{k+1}}$ (say $\a_{u,V_{k+1}}$) is less or equal to $1/16$, then we calculate $B_{u,V_{k+1}}\cup B_{u',V_{k+1}}$ and $\ell_{u,V_{k+1}}$. We also calculate an isometric affine map $\Psi$ that takes $u$ to the origin, takes $u'$ to a point with positive first coordinate, and takes the line $\ell_{u,V_{k+1}}$ onto the $x_1$-axis. Then we determine those points $u_1,\dots,u_l \in B_{u,V_{k+1}}\cup B_{u',V_{k+1}}$ such that the first coordinates satisfy
\[ 0 = \langle\Psi(u)\rangle_1 < \langle\Psi(u_1)\rangle_1 < \cdots < \langle\Psi(u_l)\rangle_1 < \langle\Psi(u')\rangle_1.\]
By \cite[Lemma 8.3]{BS3} we know that no distinct points $\Psi(u_i), \Psi(u_j)$ have equal first coordinates so the inequalities above are justified. We set $V_{k+1}(e) = \{u_1,\dots,u_l\}$ and we add the edges $\{u, u_1\}, \dots, \{u_l,u'\}$ in $E_{k+1}^{(1)}$. The calculation of $\a_{u,V_{k+1}}$ and $\a_{u',V_{k+1}}$, as well as the calculation of $B_{u,V_{k+1}}$, $B_{u',V_{k+1}}$ takes $O(n)$ time. The calculation of $\ell_{u,V_{k+1}}$, $\Psi$, and the subsequent ordering takes $O(1)$. (However, $l$ is bounded by a constant independent of $N$; see \cite[Remark 3.2]{BNV}). Therefore, this case takes $O(n)$ time.

In total, the computation of $E_{k+1}^{(1)}$ takes $O(n)\card{E_{k}} = O(n^2)$ time, with the associated constants depending only on $N$. Moreover,
\begin{equation}\label{eq:1}
\card{E^{(1)}_{k+1}} \leq \card{E_k} + 2\sum_{e \in E_k} \card{V_{k+1}(e)}.
\end{equation}

\begin{rem}\label{rem:est1}
Let $e,e' \in E_k$ satisfy the assumptions of the second case. Then by \cite[Lemma 3.3]{BNV}, $V_{k+1}(e)\cap V_{k+1}(e') = \emptyset$. 
\end{rem}

\subsubsection{Edges coming from points in $\mathcal{F}_k$}
For each element $u$ in $\mathcal{F}_k$ we calculate the closed ball $B_{u,V_{k+1}}$ and the associated line $\ell_{u,V_{k+1}}$ which takes $O(n)$ time. Calculate an isometric affine map $\Psi:\R^N \to \R^N$ that takes the line $\ell_{u,V_{k+1}}$ to the $x_1$-axis, and $u$ to a point with zero first coordinate. This takes $O(1)$ of time. If there exists $u',u'' \in B_{u,V_{k+1}}\cap V_k$ such that $\langle\Psi(u')\rangle_1 < 0 < \langle\Psi(u'')\rangle_1$, then no new edges are obtained from $u$ and we move to the next element of $\mathcal{F}_k$.  Otherwise, we work as follows.
\begin{enumerate}
\item If there exists no $u' \in B_{u,V_{k+1}}\cap V_k$ such that $\langle\Psi(u')\rangle_1 < 0 $, then enumerate the points $u_1,\dots,u_l \in \overline{B}(u, 2^{1-n_{k+1}}R_0)$ such that 
\[ \langle\Psi(u_1)\rangle_1 < \cdots < \langle\Psi(u_l)\rangle_1 < 0 = \langle\Psi(u)\rangle_1\]
and add the edges $\{u_1,u_2\}, \cdots, \{u_l,u\}$ in $E_{k+1}^{(2)}$.
\item If there exists no $u' \in B_{u,V_{k+1}}\cap V_k$ such that $\langle\Psi(u')\rangle_1 > 0 $, then enumerate the points $u_1,\dots,u_l \in \overline{B}(u, 2^{1-n_{k+1}}R_0)$ such that 
\[ \langle\Psi(u_1)\rangle_1 > \cdots > \langle\Psi(u_l)\rangle_1 > 0 = \langle\Psi(u)\rangle_1\]
and add the edges $\{u_1,u_2\}, \cdots, \{u_l,u\}$ in $E_{k+1}^{(2)}$.
\end{enumerate}
If none of the above is true, then no new edges are obtained from $u$ and we move to the next element of $\mathcal{F}_k$. If both are true, we perform both of them. By \cite[Remarks 3.4 and 3.5]{BNV}, we have that $l \leq 6$. Therefore, the computation of points $u_i$, their enumeration, and addition of edges takes $O(1)$ time. The set of all the points in $V_{k+1}\setminus V_k$ coming from both of these cases above (if any) is denoted by $V_{k+1}(u)$.

We repeat the same process for all points in $\mathcal{F}_k$ and the construction of $E_{k+1}^{(2)}$ is done in $(\card{\mathcal{F}_k})O(n) = O(n^2)$ time. Moreover,
\begin{equation}\label{eq:2}
\card{E^{(2)}_{k+1}} \leq 2\sum_{u \in \mathcal{F}_k} \card{V_{k+1}(u)}.
\end{equation}

\begin{rem}\label{rem:est2}
By \cite[Lemma 3.6]{BNV}, if $u,u' \in \mathcal{F}_k$, then $V_{k+1}(u)\cap V_{k+1}(u') = \emptyset$. Moreover, if $e \in E_k$ and $u \in \mathcal{F}_k$, then $V_{k+1}(u)\cap V_{k+1}(e) = \emptyset$.
\end{rem}

Before proceeding to the final set $E_{k+1}^{(3)}$ we enumerate
\[ \mathcal{N}_{k} = \{u_1,\dots,u_l\}.\]

\subsubsection{Edges coming from $\mathcal{N}_k$: first step}\label{sec:3-1}
Define the set 
\[ V_{k+1}'(u_1) = B(u_1, C_02^{-n_{k+1}}R_0)\cap V_{k+1}.\] 
The construction of this set requires $O(n)$ time and its cardinality is $O(1)$. We now go over the set  $E_{k+1}^{(1)}\cup E_{k+1}^{(2)}$ formed above and if a point $v \in V_{k+1}'(u_1)$ is contained in an edge of $E_{k+1}^{(1)}\cup E_{k+1}^{(2)}$, then we remove it from the set $V_{k+1}'(u_1)$ and in this way we form the set $V_{k+1}(u_1)$ in $O(n)$ time. By the doubling property of $\R^N$, $\card{V_{k+1}(u)} = O(1)$. Moreover, from the previous subsections we already have that every point in $V_k$ is already contained in an edge of $E_{k+1}^{(1)}\cup E_{k+1}^{(2)}$. Therefore,
\[ V_{k+1}(u_1) \subset V_{k+1} \setminus \left( V_k \cup \bigcup_{e\in E_k}V_{k+1}(e) \cup \bigcup_{u\in \mathcal{F}_k}V_{k+1}(u) \right).\]
If the set $V_{k+1}(u_1)$ is empty, then we are done with $u_1$ and we move to $u_2$. Otherwise, we construct the set
\[ \Flat(k+1,u_1) = \bigcup_{u\in V_{k+1}(u_1)}  \Flat(u,V_k,V_{k+1}).\]
Since there are $O(1)$ points in $V_{k+1}(u)$, by Lemma \ref{lem:flat} it takes a total of $O(n)$ time to determine $\bigcup_{u\in V_{k+1}(u_1)}\Flat(u,V_k,V_{k+1})$. We make sure that the latter set does not contain the same edge twice and we denote it by $\Flat(k+1,u_1)$. Since the cardinality of $\Flat(k+1,u_1)$ is $O(1)$, this can be done in $O(1)$ time. Define $\tilde{V}_{k+1}(u_1)$ to be the union of $V_{k+1}'(u_1)$ and the set of all points in $V_{k+1}$ that belong to an edge in $\Flat(k+1,u_1)$. 

Next, we form the graph $G_{k+1}(u_1) = (\tilde{V}_{k+1}(u_1), \Flat(k+1,u_1))$. We apply the algorithm of Lemma \ref{lem:graph1} which in $O(1)$ time returns the components of $G_{k+1}(u_1)$. If there is only one component, then we set $E_{k+1}(u_1) :=  \Flat(k+1,u_1)$ (making sure in $O(1)$ time that no edge appears twice) and we add all edges of $E_{k+1}(u_1)$ (if any) in $E_{k+1}^{(3)}$ and we move to $u_2$. If there are $p$ components with $p\geq 2$, then we fix a point $u_{1,i}$ in each component and we set
\[ E_{k+1}(u_1) := \{\{u_{1,1},u_{1,2}\}, \dots, \{u_{1,1}, u_{1,p}\}\} \cup \Flat(k+1,u_1)\]
and we add all edges of $E_{k+1}(u_1)$ in $E_{k+1}^{(3)}$. 

Recall that each point in $V_{k+1}(u_1)$ can be in at most 2 edges of $\Flat(k+1,u_1)$. Therefore, it is not hard to see that
\[ \card(E_{k+1}(u_1)) \leq 2\card(V_{k+1}(u_1)).\]

If $l=1$, then the definition of $E_{k+1}^{(3)}$ is complete we move to Section \ref{sec:wrap-up}; otherwise and we move to $u_2$.

\subsubsection{Edges coming from $\mathcal{N}_k$: inductive step} Suppose that we have already added edges to $E_{k+1}^{(3)}$ from vertices $u_1,\dots,u_{p-1}$ with $p\leq l$. The definition of the set $V_{k+1}(u_p)$ is the same as with $V_{k+1}(u_1)$ with one important exception: this time we go over the set $E_{k+1}^{(1)}\cup E_{k+1}^{(2)} \cup E_{k+1}^{(3)}$ instead of just $E_{k+1}^{(1)}\cup E_{k+1}^{(2)}$. Therefore,
\begin{equation}\label{eq:est3}
V_{k+1}(u_1) \subset V_{k+1} \setminus \left( V_k \cup \bigcup_{e\in E_k}V_{k+1}(e) \cup \bigcup_{u\in \mathcal{F}_k}V_{k+1}(u) \cup \bigcup_{i=1}^{p-1}V_{k+1}(u_i) \right).
\end{equation}
The rest of the construction is verbatim the same as before and we add to $E_{k+1}^{(3)}$ a new (possibly empty) set of edges $E_{k+1}(u_p)$ with
\begin{equation}\label{eq:3}
\card(E_{k+1}(u_p)) \leq 2\card(V_{k+1}(u_p)).
\end{equation}

\subsubsection{Definition of $E_{k+1}$}\label{sec:wrap-up}
Going over the entire set of $\mathcal{N}_k$, we finally define $E^{(3)}_{k+1}$ in $O(n^2)$ time. From Remark \ref{rem:est1}, Remark \ref{rem:est2}, and (\ref{eq:est3}), we have that the three sets $E^{(1)}_{k+1}$, $E^{(2)}_{k+1}$, $E^{(3)}_{k+1}$ are mutually disjoint and by estimates (\ref{eq:1}), (\ref{eq:2}), (\ref{eq:3}),
\[ \card(E^{(1)}_{k+1} \cup E^{(2)}_{k+1} \cup E^{(3)}_{k+1}) \leq \card{V_k} + 2 \card(V_{k+1}\setminus V_k) \leq 2 \card{V_{k+1}}.\]
We now set $E_{k+1} = E^{(1)}_{k+1} \cup E^{(2)}_{k+1} \cup E^{(3)}_{k+1}$ (the enumeration takes $O(n)$ time) and the inductive step is complete.

\subsection{Final Step}\label{sec:finalstep} To finish the proof of Theorem \ref{thm:main}, we remark that for the final graph $G_m = (V_m,E_m)$ that we obtained, we have that $V_m=V$ and $\card{E_m} = O(n)$. Therefore, the algorithm of Lemma \ref{lem:tour}, produces the desired tour in $O(n^3)$ time. This completes the proof of Theorem \ref{thm:main}.

\subsection{Sharpness of exponent 3}\label{sec:sharp}
Here we discuss why the time complexity $O(n^3)$ is sharp. We will do this by analyzing the computations of the flat pairs and flatness modules. Note that the complexity of $O(n^3)$ appears during many other steps of the algorithm as well (e.g. Lemma \ref{lem:graph1} and Lemma \ref{lem:tour}). 

Let $n$ be a positive integer and suppose that for all $i \in\{1,\dots,n\}$ , $\card V_i = i$. At step $k$, we perform the calculation of sets $\Flat(v,V_k,V_{k+1})$ for each $v\in V_k$ which, by Remark \ref{rem:sharp}, requires $C\card{V_{k+1}}$ time for each such $v$, and for some fixed constant $C>0$. Therefore, the computation of all flat pairs at step $k$ requires $C(\card V_k)(\card V_{k+1})$ computations. Therefore, our total computations over all steps are
	\[ C\sum_{k=1}^{n-1}  (\card V_k) (\card V_{k+1}) = C\sum_{k=1}^{n-1} k(k+1) \geq 
	C\left( \frac{n}{2} \right)^2 \frac{n}{2} = C\frac{n^3}{8}.\]
Since the calculation of the flatness modules for each net $V_i$ is but a part of the algorithm, the total computation time is at least a fixed multiple of $n^3$.


\bibliography{tsp-refs}
\bibliographystyle{amsbeta}

\end{document}